\newtheorem{theorem}{Theorem}
\newtheorem{corollary}{Corollary}
\newtheorem{definition}{Definition}
\newcommand{\citep}{\cite}
\title{Truthful Fair Division Without\\ Free Disposal}
\author{
Xiaohui Bei\\Nanyang Technological University
\and
Guangda Huzhang\\Nanyang Technological University
\and
Warut Suksompong\\University of Oxford
}
\date{\vspace{-3ex}}
\begin{document}

\maketitle

\begin{abstract}
We study the problem of fairly dividing a heterogeneous resource, commonly known as cake cutting and chore division, in the presence of strategic agents. While a number of results in this setting have been established in previous works, they rely crucially on the \emph{free disposal} assumption, meaning that the mechanism is allowed to throw away part of the resource at no cost. In the present work, we remove this assumption and focus on mechanisms that always allocate the entire resource. We exhibit a truthful and envy-free mechanism for cake cutting and chore division for two agents with piecewise uniform valuations, and we complement our result by showing that such a mechanism does not exist when certain additional constraints are imposed on the mechanisms. Moreover, we provide bounds on the efficiency of mechanisms satisfying various properties, and give truthful mechanisms for multiple agents with restricted classes of valuations.
\end{abstract}

\section{Introduction}

Given a heterogeneous divisible resource and a set of interested agents with potentially differing valuations on different parts of the resource, how can we allocate the resource to the agents in such a way that all agents perceive the resulting allocation as fair? The resource is often modeled as a cake in the literature, and the problem, which therefore commonly goes by the name of \emph{cake cutting}, has occupied the minds of mathematicians, computer scientists, economists, and political scientists alike for the past seventy years \cite{brams1996fair,moulin2004fair,Pro16,RW98,steinhaus48}.
Cake in the cake cutting problem is used to represent a \emph{desirable} resource; all agents wish to maximize the amount of resource that they receive. In contrast, the dual problem to cake cutting, known as \emph{chore division}, aims to allocate an \emph{undesirable} resource to the agents, with every agent wanting to receive as little of the resource as possible. Though several algorithms for cake cutting also apply to chore division, the theoretical properties of the two problems differ in many cases, and much less work has been done on chore division than on cake cutting~\cite{dehghani2018chore,farhadi2017complexity,heydrich2015dividing,peterson1998exact,peterson2002four}.

Perhaps the simplest and most well-known fair division protocol is the \emph{cut-and-choose protocol}, which works for both cake cutting and chore division with two agents. The protocol operates by letting the first agent divide the resource into two parts that she values equally, and letting the second agent choose the part that she prefers. The resulting allocation is always \emph{envy-free}---each agent likes her part at least as much as the other agent's part, and \emph{proportional}---both agents find their part better than or equal to half of the entire resource. However, the protocol has the disadvantage that it is not \emph{truthful}, meaning that a strategic agent can sometimes benefit from misreporting her valuation to the protocol. For example, if the first agent values the whole cake equally, according to the protocol she will divide the cake into half and get half of her value for the entire cake. However, if she knows that the second agent only cares about the leftmost quarter of the cake, she can divide the cake into the leftmost quarter and the rest, knowing that the second agent will choose the left part and leave her with three-quarters of the cake. The failure to satisfy truthfulness renders the protocol difficult to participate in, since the first agent needs to guess the second agent's valuation in order to find a beneficial manipulation.

This issue was first addressed by Chen et al. \cite{CLPP13}, who gave a truthful deterministic cake cutting mechanism that is Pareto optimal, envy-free, and proportional for any number of agents with piecewise uniform valuations.
Chen et al.'s result shows that fairness and truthfulness are compatible in the allocation of heterogeneous resources. Nevertheless, their result hinges upon a pivotal assumption known as the \emph{free disposal} assumption, which says that the mechanism is allowed to throw away part of the resource without incurring any cost.\footnote{Note that free disposal does not preclude Pareto optimality.
The mechanism can throw away parts of the resource not valued by any agent and still maintain Pareto optimality; this is exactly what Chen et al.'s mechanism does.} While certain resources such as cake or machine processing time may be easy to get rid of, for other resources this is not the case. For instance, when we divide a piece of land among antagonistic agents or countries, we cannot simply throw away part of the land, and any piece of land left unallocated constitutes a potential subject of future dispute. The free disposal assumption is even less reasonable when it comes to chore allocation---indeed, with this assumption we might as well simply dispose of the entire chore altogether!

With this motivation in mind, we consider in the present paper the problem of fairly and truthfully dividing heterogeneous resources \emph{without the free disposal assumption}. Not having the ability to throw away part of the resource makes the task of the mechanism more complicated. 
For example, with free disposal allowed, the mechanism can throw away parts that are not valued by any agent, thereby preventing an agent from gaining by not reporting parts of the resource for which she is the only one who has positive value, in the hope of getting some of those parts for free along with a larger share of the remaining parts. As Chen et al. \cite{CLPP13} noted, getting rid of the free disposal assumption adds ``significant complexity'' to the problem, since the mechanism would have to specify exactly how to allocate parts that no agent desires. The same group of authors also gave an example illustrating that removing the assumption can be problematic even in the special case of two agents with very simple valuations. Indeed, could it be that there is an impossibility result once we dispose of free disposal?

\subsection{Our Results}

Throughout the paper, we focus on deterministic mechanisms that are required to allocate the entire resource, which we model as the interval $[0,1]$. We assume that agents have \emph{piecewise uniform} valuations, meaning that for each agent the cake can be partitioned into desired and undesired intervals, and the agent has the same marginal utility for any fractional piece of any desired interval. We investigate the compatibility of truthfulness and fairness in this setting.

First, in Section~\ref{sec:mechanism-two}, we show that truthfulness and fairness are compatible when there are two agents by exhibiting a truthful, envy-free, and Pareto optimal cake cutting mechanism (Theorem~\ref{thm:twoagents-cake}). At a high level, the mechanism lets the two agents ``eat'' their desired intervals of the cake at the same speed but starting from different ends of the cake.  Using a simple reduction from chore division to cake cutting, we also derive a chore division mechanism for two agents with the same set of properties (Theorem~\ref{thm:twoagents-chore}).

In Section~\ref{sec:impossibility}, we show that if we add certain requirements for the mechanism on top of being fair and truthful, then no desirable mechanism exists even for two agents. In particular, the impossibility holds when we make any one of the following assumptions in addition to truthfulness and envy-freeness: (i) \emph{anonymity}---the mechanism must treat both agents equally (Theorem~\ref{thm:anonymous}); (ii) \emph{connected piece assumption}---the mechanism must allocate a single interval to each agent (Theorem~\ref{thm:connected}); and (iii) \emph{position obliviousness}---the values that the agents receive depend only on the lengths of the pieces desired by various subsets of agents and not on the positions of these pieces (Theorem~\ref{thm:position-oblivious}). In fact, our first impossibility result still holds even if we remove envy-freeness, while for our second and third results we can replace envy-freeness by the significantly weaker fairness notion of \emph{positive share}, which requires that each agent receive a positive utility (for cake cutting) or not incur the entire cost (for chore division). Our results in Sections~\ref{sec:mechanism-two} and \ref{sec:impossibility} are summarized in Table~\ref{table:twoagents}.

\begin{table}
\centering
    \begin{tabular}{| c | c |}
    \hline
     \textbf{Properties of mechanism} & \textbf{Existence} \\ \hline
     Truthful + envy-free + Pareto optimal & Yes (Theorems~\ref{thm:twoagents-cake} and \ref{thm:twoagents-chore}) \\ \hline
     Truthful + connected + position oblivious & Yes (trivial mechanism)  \\ \hline
     Truthful + anonymous & No (Theorem~\ref{thm:anonymous} and Corollary~\ref{cor:chore-impossibility})   \\ \hline
     Truthful + connected + positive share & No (Theorem~\ref{thm:connected} and Corollary~\ref{cor:chore-impossibility}) \\ \hline
     Truthful + position oblivious + positive share & No (Theorem~\ref{thm:position-oblivious} and Corollary~\ref{cor:chore-impossibility}) \\
    \hline
    \end{tabular}
    \vspace{5mm}
    \caption{Existence of mechanisms satisfying truthfulness in combination with other properties in the case of two agents. By ``trivial mechanism'' we refer to a mechanism that allocates the entire resource to one fixed agent. All results hold for both cake cutting and chore division.}
    \label{table:twoagents}
\end{table}

In Section~\ref{sec:competitive}, we investigate the efficiency of mechanisms satisfying combinations of truthfulness, envy-freeness, and Pareto optimality. Our metric is the \emph{efficiency ratio}, a standard measure for the welfare performance of mechanisms.\footnote{The efficiency ratio is closely related to the \emph{price of fairness}, which measures the worst-case welfare loss due to imposing fairness constraints \cite{BeiLuMa19,CaragiannisKaKa11,heydrich2015dividing}.} For cake cutting, we show that our mechanism in Section~\ref{sec:mechanism-two} achieves an efficiency ratio of $3/4$, and that this is optimal among all truthful and envy-free mechanisms (Theorem~\ref{thm:cake-TR-EF} and Corollary~\ref{cor:cake-eating}). By contrast, for chore division we show that our mechanism achieves the worst possible efficiency ratio of 0; however, this is necessarily the case for any truthful and envy-free mechanism (Theorem~\ref{thm:chore-TR-EF}). Moreover, we provide tight bounds on the efficiency ratio of mechanisms satisfying each of envy-freeness and Pareto optimality, for both cake cutting and chore division (Theorems~\ref{thm:cake-EF}, \ref{thm:cake-PO}, and \ref{thm:chore-EF}).

Finally, in Section~\ref{sec:mechanism-many}, we consider the more general setting where there are multiple agents. We assume that each agent only values a single interval of the form $[0,x_i]$. We present a truthful, envy-free, and Pareto optimal cake cutting mechanism (Theorem~\ref{thm:many-cake}) and a truthful, proportional, and Pareto optimal chore division mechanism (Theorem~\ref{thm:many-chore}) for any number of agents with valuations in this class.

\subsection{Related Work}

Cake cutting has been a central topic in the area of social choice and economics for decades.
While the existence and computation of fair allocations have been extensively studied~\cite{aziz2016discrete,aziz2015discrete,BT95,dubins1961cut,GoldbergHoSu20,Stro80,su1999}, the work of Chen et al. \cite{CLPP13} that we mentioned earlier was the first to consider incentive issues.
As with Chen et al., Maya and Nisan \cite{MN12} considered piecewise uniform valuations and gave a characterization of truthful and Pareto optimal mechanisms for two agents. Recently, Alijani et al. \cite{AFGST17} presented a truthful and envy-free mechanism in the setting where every agent only values a single interval.

For valuation functions beyond piecewise uniform, most results are negative.
For example, for piecewise constant valuations, Aziz and Ye \cite{aziz2014cake} showed that there is no truthful and robust proportional mechanism,\footnote{Aziz and Ye \cite{aziz2014cake} call an allocation \emph{robust proportional} if it satisfies the following property: even if an agent perturbs her value density function, as long as the ordinal information of the function is unchanged, then the allocation remains proportional. Robust proportionality is a stronger notion than standard proportionality.} two works~\cite{BCHTW17,Menon2017} showed that there is no proportional mechanism that allocates connected pieces or is non-wasteful, and Bei et al. \cite{BCHTW17} also showed that there is no truthful mechanism that satisfies position obliviousness.
In the Robertson-Webb query model, Kurokawa et al. \cite{kurokawa2013cut} showed that there is no truthful and envy-free mechanism with bounded queries, while Br\^{a}nzei and Miltersen \cite{branzei2015dictatorship} proved that any deterministic truthful mechanism for two agents must be a dictatorship.

In all of the works above, either the free disposal assumption is made, or it is assumed that every piece of the cake is valuable for at least one agent. In contrast, in our work the mechanism is required to always allocate the entire cake.\footnote{On the other hand, in the study of truthful mechanisms for allocating \emph{indivisible} items, it is commonly assumed that the mechanism is required to allocate all items~\cite{AmanatidisBiCh17,AmanatidisBiMa16}.}
Furthermore, all aforementioned results are restricted to deterministic mechanisms. If one allows randomization, several truthful-in-expectation mechanisms that guarantee either proportionality or envy-freeness have been proposed~\cite{branzei2015dictatorship,CLPP13,MT10}.

Finally, preferences analogous to piecewise uniform valuations have been considered in other settings including matching~\cite{Bade15,BogomolnaiaMo04} and collective choice~\cite{AzizBoMo17,BogomolnaiaMoSt05,Duddy15}, where they are known as \emph{dichotomous preferences}. Fairness notions corresponding to proportionality and positive share have also been studied in the collective choice setting.

\section{Preliminaries}
\label{sec:prelim}

We consider a heterogeneous divisible resource, which we represent by the interval $[0,1]$. A \emph{piece} of the resource is a finite union of disjoint intervals. The resource is to be allocated to $n$ agents $a_1,a_2,\dots,a_n$. Each agent $a_i$ has a \emph{density function} $f_i:[0,1]\rightarrow\mathbb{R}^+\cup\{0\}$, which captures how the agent values different parts of the resource.
We assume that the agents have \emph{piecewise uniform} valuations, i.e., for each agent $a_i$, the density function $f_i$ takes on the value $1$ on a finite set of intervals and $0$ on the remaining intervals. The value of agent $a_i$ for a subset $S\subseteq[0,1]$ is defined as $v_i(S)=\int_S f_i \ dx$, which is equivalent to the total length of the intervals in $S$ on which $f_i$ takes on the value $1$.\footnote{In some papers, valuations are normalized so that $v_i([0,1])=1$ for all $i$. We do not follow this convention except in Section~\ref{sec:competitive}.} For chore division we will refer to the value as \emph{cost}. Let $W_i\subseteq [0,1]$ denote the piece on which $f_i=1$. We refer to a setting with agents and their density functions as an \emph{instance}.

An \emph{allocation} of the resource is denoted by a vector $A=(A_1,A_2,\dots,A_n)$, where $A_i$ is a union of finitely many intervals\footnote{We assume without loss of generality that these intervals are closed intervals. The assumption that each $A_i$ is a union of finitely many intervals is standard in the cake-cutting literature (see, e.g., \cite{Pro16}).} that represents the piece of the resource allocated to $a_i$, and $A_i\cap A_j$ has measure zero for any $i\neq j$. We consider two different types of resources: desirable resources, which we represent by a \emph{cake}, and undesirable resources, which we represent by a \emph{chore}. We refer to the problem of allocating the two types of resources as \emph{cake cutting} and \emph{chore division} respectively. The agents want to maximize their value for their allocated piece in cake cutting and minimize their cost for their allocated piece in chore division.

We are now ready to define the fairness properties that we consider in this paper.

\begin{definition}
In cake cutting, we say that an allocation $(A_1,A_2,\dots,A_n)$ satisfies
\begin{itemize}
\item \emph{envy-freeness}, if for every agent $a_i$, we have $v_i(A_i)\geq v_i(A_j)$ for any $j$;
\item \emph{proportionality}, if for every agent $a_i$, we have $v_i(A_i)\geq v_i([0,1])/n$;
\item \emph{positive share}, if for every agent $a_i$, we have $v_i(A_i)>0$.
\end{itemize}
In chore division, envy-freeness and proportionality are defined analogously but with the inequality signs reversed, while positive share is defined by the inequality $v_i(A_i)<v_i([0,1])$.
\end{definition}

A \emph{mechanism} is a function $\mathcal{M}:(f_1,f_2,\dots,f_n)\rightarrow (A_1,A_2,\dots,A_n)$ which, given the input density functions of the agents, computes an allocation for them. We only consider deterministic mechanisms in this paper, meaning that the allocation is completely determined by the input density functions. Moreover, we assume that the mechanism has to allocate the entire resource to the agents, i.e., in any allocation $(A_1,A_2,\dots,A_n)$ returned by the mechanism, $\bigcup_{i=1}^n A_i=[0,1]$. In other words, the mechanism does not have \emph{free disposal}. Note that when the entire resource is allocated, envy-free implies proportionality, and both notions are equivalent in the case of two agents. Moreover, both envy-freeness and proportionality are stronger than positive share.

We end this section by defining a number of properties of mechanisms that we consider in this paper. Given a vector of input density functions $\textbf{f}=(f_1,f_2,\dots,f_n)$, let $L_{\textbf{f}}$ be the indicator function that maps $\textbf{f}$ to a vector with $2^n$ components, where each component corresponds to a distinct subset of agents and the value of the component is the length of the piece desired only by that subset of agents (and not by any agent outside the subset).

\begin{definition}
A mechanism $\mathcal{M}:(f_1,f_2,\dots,f_n)\rightarrow (A_1,A_2,\dots,A_n)$ is said to satisfy
\begin{itemize}
\item \emph{envy-freeness}, if it always returns an envy-free allocation;
\item \emph{proportionality}, if it always returns a proportional allocation;
\item \emph{positive share}, if it always returns an allocation satisfying positive share;
\item \emph{truthfulness}, if it is a dominant strategy for every agent to report her true density function;
\item \emph{Pareto optimality}, if for any allocation returned by the mechanism, there does not exist another allocation that makes no agent worse off and at least one agent better off with respect to the same density functions;
\item the \emph{connected piece assumption}, if each $A_i$ is always a single interval;
\item \emph{anonymity}, if the following holds: Let $f_1,f_2,\dots,f_n$ be any density functions, and $\sigma$ be any permutation of $(1,2,\dots,n)$. If
$\mathcal{M}(f_1,f_2,\dots,f_n)=(A_1,A_2,\dots,A_n)$ and
$\mathcal{M}(f_{\sigma(1)},f_{\sigma(2)},\dots,f_{\sigma(n)})=(A'_1,A'_2,\dots,A'_n)$,
then $v_i(A_i)=v_i(A'_{\sigma^{-1}(i)})$ for every $i$.
\item \emph{position obliviousness}, if the following holds: Let $\emph{\textbf{f}}$ and $\emph{\textbf{f}}'$ be any vectors of density functions such that $L_{\emph{\textbf{f}}}=L_{\emph{\textbf{f}}'}$. If $\mathcal{M}(\emph{\textbf{f}})=(A_1,A_2,\dots,A_n)$ and $\mathcal{M}(\emph{\textbf{f}}')=(A'_1,A'_2,\dots,A'_n)$, then $v_i(A_i)=v_i'(A'_i)$ for every $i$.\footnote{This is a weaker notion of position obliviousness than the one considered by Bei et al.~\cite{BCHTW17}: Our definition only requires that the agents get the same value if the indicator function of their density functions remain the same, whereas Bei et al.'s definition also requires the pieces to be allocated in ``equivalent'' ways.}
\end{itemize}
\end{definition}

Intuitively, a mechanism is anonymous if the utility that the agents receive do not depend on the identities of the agents, and position oblivious if the values that the agents receive depend only on the lengths of the pieces desired by various subsets of agents and not on the positions of these pieces.

\section{Truthful Mechanisms for Two Agents}
\label{sec:mechanism-two}

In this section, we focus on the case of two agents. We show that in this case, there exists a truthful, envy-free, and Pareto optimal mechanism for both cake cutting and chore division, for two agents with arbitrary piecewise uniform valuations.

We first describe the cake cutting mechanism.

\begin{framed}
\noindent
\textbf{Mechanism~1} (for cake cutting between two agents) \\

\noindent
\emph{Step~1:} Find the smallest value of $x\in[0,1]$ such that $v_1([0,x])=v_2([x,1])$.\footnotemark \\

\noindent
\emph{Step~2:} Assign to $a_1$ the intervals in $[0,x]$ valued by $a_1$ and the intervals in $[x,1]$ \emph{not} valued by $a_2$, and assign the rest of the cake to $a_2$.
\end{framed}
\footnotetext{The existence of $x$ is guaranteed by the intermediate value theorem.}

While this is a succinct description of the mechanism, it turns out that the description is somewhat difficult to work with. We next provide an alternative formulation that is more intuitive and will help us in establishing the claimed properties of the mechanism.

\begin{framed}
\noindent
\textbf{Mechanism~1} (alternative formulation) \\

\noindent
\emph{Phase~1:} Let $a_1$ start at point $0$ of the cake moving to the right and $a_2$ start at point $1$ of the cake moving to the left. Let both agents ``eat'' the cake with the same constant speed, jumping over any interval for which they have no value according to their reported valuations.  If the agents are at the same point while both are still eating, go to Phase~3. Else, one of the agents has no more valued interval to eat; go to Phase~2. \\

\noindent
\emph{Phase~2:} Assume that $a_i$ is the agent who has no more valued interval to eat. Let $a_i$ stop and $a_{3-i}$ continue eating. If the agents are at the same point (either while $a_{3-i}$ eats or while $a_{3-i}$ jumps over an interval of zero value), go to Phase~3. Else, both agents have stopped but there is still unallocated cake between their current points. In this case, let $a_{3-i}$ continue eating the unallocated cake until she is at the same point as $a_i$, and go to Phase~3. \\

\noindent
\emph{Phase~3:} Assume that both agents are at point $x$ of the cake. (It is possible that the two agents meet while both of them are jumping. In this case, we let $a_2$ jump first.) Assign any unallocated interval to the left of $x$ to $a_2$ and any unallocated interval to the right of $x$ to $a_1$.
\end{framed}

From the perspective of this formulation, Mechanism~1 bears a resemblance to the \emph{probabilistic serial} mechanism for the random assignment problem~\cite{BogomolnaiaMo01}; both mechanisms let the agents simultaneously eat their desired part of the resource at the same speed.

We now prove the claimed properties of Mechanism~1.

\begin{theorem}
\label{thm:twoagents-cake}
Mechanism~1 is a truthful, envy-free, and Pareto optimal cake cutting mechanism for two agents.
\end{theorem}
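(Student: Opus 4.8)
The plan is to argue directly from the succinct Step~1--Step~2 description of Mechanism~1, treating the eating-process formulation only as motivation (its equivalence with the succinct description can be checked as a routine preliminary, or bypassed entirely). Write $W_1,W_2$ for the reported desired sets, let $x$ be the cut point from Step~1 (its existence follows from the intermediate value theorem, using that $y\mapsto v_1([0,y])-v_2([y,1])$ is continuous and nondecreasing), and record the closed form $A_1=(W_1\cap[0,x])\cup([x,1]\setminus W_2)$ and $A_2=([0,x]\setminus W_1)\cup(W_2\cap[x,1])$. The single fact that powers everything is the defining equation $|W_1\cap[0,x]|=|W_2\cap[x,1]|$, from which a one-line computation yields the identities $v_1(A_1)=|(W_1\cup W_2)\cap[x,1]|$, $v_2(A_2)=|(W_1\cup W_2)\cap[0,x]|$, $v_1(A_2)=|W_1\cap W_2\cap[x,1]|$, and $v_2(A_1)=|W_1\cap W_2\cap[0,x]|$.

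Envy-freeness is then immediate: $v_1(A_2)=|W_1\cap W_2\cap[x,1]|\le|W_1\cap[x,1]|\le|(W_1\cup W_2)\cap[x,1]|=v_1(A_1)$, and symmetrically $v_2(A_1)\le v_2(A_2)$. For Pareto optimality I would first check, against the closed form of $A_1$ and $A_2$, that every point of $W_1\cup W_2$ is allocated to an agent who values it (three cases: $W_1\cap W_2$, $W_1\setminus W_2$, $W_2\setminus W_1$); this gives $v_1(A_1)+v_2(A_2)=|W_1\cup W_2|$. Since \emph{every} allocation $B$ satisfies $v_1(B_1)+v_2(B_2)\le|W_1\cup W_2|$ (each point contributes at most $1$ to this sum, and $0$ unless the agent receiving it values it), no allocation can Pareto-dominate the output.

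Truthfulness is the crux, and the plan is a two-bound sandwich. Fix the other agent's report and let $a_1$ report an arbitrary $W_1'$, producing cut point $x'$; the mechanism then gives $a_1$ the set $(W_1'\cap[0,x'])\cup([x',1]\setminus W_2)$, whose true value is $v_1(A_1')=|W_1\cap W_1'\cap[0,x']|+|(W_1\setminus W_2)\cap[x',1]|$. Bounding $|W_1\cap W_1'\cap[0,x']|$ above by $|W_1\cap[0,x']|$ gives $v_1(A_1')\le g(x')$, where $g(y):=|W_1\cap[0,y]|+|(W_1\setminus W_2)\cap[y,1]|$ is nondecreasing (its a.e.\ derivative is $\mathbf{1}_{W_1\cap W_2}\ge 0$) and satisfies $g(x)=v_1(A_1)$. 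Bounding the same term instead by $|W_1'\cap[0,x']|=|W_2\cap[x',1]|$ (the defining equation at $x'$) gives $v_1(A_1')\le|(W_1\cup W_2)\cap[x',1]|$, which is nonincreasing in $x'$ and equals $v_1(A_1)$ at $x'=x$. Whether the misreport pushes $x'$ below or above $x$, one of the two bounds shows $v_1(A_1')\le v_1(A_1)$. The argument for $a_2$ is parallel: $v_2(A_2')=|(W_2\setminus W_1)\cap[0,x']|+|W_2\cap W_2'\cap[x',1]|$ is bounded above both by the nonincreasing envelope $p(y):=|(W_2\setminus W_1)\cap[0,y]|+|W_2\cap[y,1]|$ and by the nondecreasing function $y\mapsto|(W_1\cup W_2)\cap[0,y]|$, each of which equals $v_2(A_2)$ at $y=x$; a split on $x'\lessgtr x$ again closes it.

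I expect the truthfulness step to be the only genuine obstacle, and within it the single nonobvious move is realizing that each $v_i(A_i)$ admits two equivalent expressions—one increasing and one decreasing in the cut point—so that \emph{either} direction in which a misreport moves the cut point is controlled by one of the two bounds. The remaining loose ends are minor: confirming that the two formulations of the mechanism coincide (or simply not invoking the eating formulation), handling the degenerate cases $x\in\{0,1\}$, and noting that none of the arguments uses the word ``smallest'' in Step~1, so the tie-break among valid cut points is immaterial to these three properties.
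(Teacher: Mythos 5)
Your proof is correct, but it takes a genuinely different route from the paper's. The paper works entirely with the alternative ``eating'' formulation and argues informally phase by phase: for truthfulness, that skipped intervals are lost to the other agent and the time saved buys at most the same length elsewhere; for envy-freeness, that each agent keeps at least half of her valued cake in each phase. You instead work from the succinct cut-point description, derive the closed forms $v_1(A_1)=|(W_1\cup W_2)\cap[x,1]|$, $v_2(A_2)=|(W_1\cup W_2)\cap[0,x]|$, $v_1(A_2)=|W_1\cap W_2\cap[x,1]|$, $v_2(A_1)=|W_1\cap W_2\cap[0,x]|$ from the defining equation $|W_1\cap[0,x]|=|W_2\cap[x,1]|$, and prove truthfulness by sandwiching the manipulated value between a nondecreasing and a nonincreasing envelope that both equal $v_1(A_1)$ at the honest cut point --- I checked the envelope derivatives ($\mathbf{1}_{W_1\cap W_2}$ and $-\mathbf{1}_{W_1\cap W_2}$ respectively) and the case split on $x'\lessgtr x$, and the argument closes. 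What your approach buys is rigor and self-containment: the paper's truthfulness argument compresses a nontrivial accounting (``eat intervals of no more than the same length'') into one sentence and relies on the unproven equivalence of the two formulations, whereas your computation is fully explicit and, as you note, shows the tie-break in Step~1 is irrelevant. What the paper's approach buys is intuition and a template that transfers to the phase-based reasoning reused elsewhere (e.g., Mechanism~3). Your observations that envy-freeness and Pareto optimality follow in one line from the closed forms are also sound.
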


\begin{proof}
We begin with truthfulness. Note that there is no incentive for an agent to report an interval that she does not value, since this can only result in the agent wasting time eating such intervals. So the only potential deviation is for the agent to report a strict subset of the intervals that she values. If the agent does not report intervals that she values, then the intervals that she jumps over before the agents meet will be lost to the other agent, and the agent can use the extra time gained from not reporting these intervals to eat intervals of no more than the same length. Moreover, not reporting intervals after the agents meet has no effect on the outcome of the mechanism.

Next, for envy-freeness, it suffices to show that each agent gets at least half of her valued intervals allocated in each phase. In Phase~1, each agent only gains intervals that she values, and loses intervals that she values (due to the other agent's eating) at no more than the same speed. In Phase~2, the agent who continues eating can only gain more, while the agent who has stopped eating has no more interval that she values. In Phase~3, $a_1$ has no unallocated interval to the left of $x$ that she values, so she cannot lose any unallocated interval that she values. The same argument holds for $a_2$.

Finally, our mechanism allocates any interval valued by at least one agent to an agent who values it. This establishes Pareto optimality.
\end{proof}

Mechanism~1 gives rise to a dual mechanism for two-agent chore division that satisfies the same set of properties.

\begin{framed}
\noindent
\textbf{Mechanism~2} (for chore division between two agents) \\

\noindent
\emph{Step~1}: Use Mechanism~1 to find an initial allocation of the chore, treating the chore valuations as cake valuations. \\

\noindent
\emph{Step~2}: Swap the pieces of the two agents in the allocation from Step~1.
\end{framed}

\begin{theorem}
\label{thm:twoagents-chore}
Mechanism~2 is a truthful, envy-free, and Pareto optimal chore division mechanism for two agents.
\end{theorem}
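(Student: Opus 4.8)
The plan is to exploit the duality between cake cutting and chore division directly, leveraging the fact that Mechanism~2 is built on top of Mechanism~1. First I would observe that in both cake cutting and chore division, the valuations are identical objects---a density function that is $1$ on $W_i$ and $0$ elsewhere---so running Mechanism~1 on the reported functions produces an allocation $(A_1, A_2)$ with $A_1 \cup A_2 = [0,1]$, and Mechanism~2 outputs $(A_2, A_1)$. The key identity is that for the swapped allocation, $v_i(A_{3-i}) = v_i([0,1]) - v_i(A_i)$, since the two pieces partition the cake. I would then translate each of the three properties of Theorem~\ref{thm:twoagents-cake} across this swap.

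For \emph{envy-freeness in chore division}: agent $a_i$ receives $A_{3-i}$ and compares it to what $a_{3-i}$ receives, namely $A_i$. The chore-envy-freeness condition requires $v_i(A_{3-i}) \le v_i(A_i)$. But by Theorem~\ref{thm:twoagents-cake}, the cake allocation is envy-free, so $v_i(A_i) \ge v_i(A_{3-i})$, which is exactly the inequality needed. So envy-freeness transfers immediately with no computation. For \emph{Pareto optimality}: I would argue that an allocation is Pareto optimal for the chore instance if and only if the swapped allocation is Pareto optimal for the corresponding cake instance; intuitively, making an agent ``worse off'' in chores (more undesired length) corresponds after swapping to making her ``better off'' in cake (more desired length), and vice versa. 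Since Mechanism~1 returns a Pareto optimal cake allocation, the swap yields a Pareto optimal chore allocation. (Concretely: Mechanism~2 never assigns an interval desired by $a_i$ to $a_i$ when the other agent does not desire it---such an interval went to the non-desiring agent in Mechanism~1 and gets swapped to $a_i$... wait, I should phrase it as: every interval desired only by one agent is, after the swap, assigned to the agent who does \emph{not} desire it, so no mutually beneficial reallocation exists.)

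The main obstacle is \emph{truthfulness}, and here I need to be careful because the incentives are reversed: in Mechanism~2 agent $a_i$ wants $v_i(A_{3-i})$ to be \emph{small}, equivalently wants $v_i(A_i)$ (her allocation under Mechanism~1, evaluated with her true valuation) to be \emph{large}. This is not literally the same as truthfulness of Mechanism~1, because in Mechanism~1 the agent's value for her own piece is evaluated under her \emph{reported} valuation, and we need a statement about her \emph{true} valuation. However, the argument in the proof of Theorem~\ref{thm:twoagents-cake} actually shows more: a deviating agent cannot increase the \emph{true} value of her allocated piece, because every interval she omits from her report is lost to the other agent (reducing her true value by that length), while the time saved lets her eat intervals of at most the same total length (and possibly those are not even truly valued). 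Reporting extra undesired intervals only wastes eating time. So under Mechanism~1, truthful reporting maximizes $v_i(A_i)$ where $v_i$ is the true valuation; hence under Mechanism~2 it minimizes $v_i(A_{3-i})$, giving truthfulness. I would state this by noting that the truthfulness proof of Theorem~\ref{thm:twoagents-cake} bounds the true value an agent can obtain, and then invoking the partition identity $v_i(A_{3-i}) = v_i([0,1]) - v_i(A_i)$ to conclude that minimizing the chore received is equivalent to maximizing the cake received.
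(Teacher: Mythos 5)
Your proposal is correct and takes essentially the same route as the paper: reduce each property of Mechanism~2 to the corresponding property of Mechanism~1 via the swap and the partition identity $v_i(A_{3-i}) = v_i([0,1]) - v_i(A_i)$. Your extra care on truthfulness---noting that the argument for Theorem~\ref{thm:twoagents-cake} in fact bounds the \emph{true} value a deviating agent can obtain, which is what the reduction needs---is a valid and slightly more explicit rendering of the point the paper states in one line.
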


\begin{proof}
First, truthfulness holds because minimizing the chore in the swapped allocation is equivalent to maximizing the chore in the initial allocation, and Theorem~\ref{thm:twoagents-cake} shows that this is exactly what Mechanism~1 incentivizes the agents to do. Next, envy-freeness holds again by Theorem~\ref{thm:twoagents-cake} because getting at most half of the chore in the swapped allocation is equivalent to getting at least half of the chore in the initial allocation. Finally, in the initial allocation any interval of the chore that incurs a cost to only one agent is allocated to that agent, so in the swapped allocation the interval is allocated to the other agent, implying that the mechanism is Pareto optimal.
\end{proof}

Besides truthfulness, envy-freeness, and Pareto optimality, how do Mechanisms~1 and 2 fare with respect to the other properties defined in Section~\ref{sec:prelim}?
\begin{itemize}
\item Mechanism~1 is not \emph{anonymous}: If $W_1=[0,0.5]$ and $W_2=[0,1]$, both agents get value $0.5$, while if $W_1=[0,1]$ and $W_2=[0,0.5]$, $a_1$ gets value $0.75$ and $a_2$ gets value $0.25$.
\item It is also not \emph{position oblivious}: If $W_1=[0,0.5]$ and $W_2=[0,1]$, both agents get value $0.5$, while if $W_1=[0.5,1]$ and $W_2=[0,1]$, $a_1$ gets value $0.25$ and $a_2$ gets value $0.75$.
\item The allocation when $W_1=[0,1]$ and $W_2=[0,0.5]$ shows that the mechanism does not satisfy the \emph{connected piece assumption}.
\end{itemize}
The same examples demonstrate that Mechanism~2 likewise satisfies none of the three properties. As we show in the next section, these negative results are in fact not restricted to the two mechanisms that we consider here, but rather apply to all possible cake cutting and chore division mechanisms, even when envy-freeness is removed or significantly weakened.

\section{Impossibility Results}
\label{sec:impossibility}

In this section, we present a number of impossibility results on the existence of fair and truthful mechanisms that satisfy certain additional properties, for both cake cutting and chore division. In fact, one of the results holds even without envy-freeness, while for the remaining results we can replace envy-freeness by the much weaker notion of positive share. Interestingly, all of the impossibility results cease to hold if the mechanism is not required to allocate the entire resource, which again highlights the crucial difference that the free disposal assumption makes. We also observe that since these results are negative, they also hold for more general classes of valuations beyond piecewise uniform.

Before we introduce our results, some comments on truthful mechanisms are in order. At first glance, truthfulness may appear as a strong requirement that limits the feasible mechanisms to Mechanisms~1 and 2 and those returning a fixed allocation. However, the examples that we exhibit next show that the class of truthful mechanisms is surprisingly rich. We only describe cake cutting mechanisms, but it is possible to obtain corresponding chore division mechanisms via the reduction used in Mechanism~2. The latter two classes of mechanisms are adaptations of the picking and exchange mechanisms~\cite{AmanatidisBiCh17} to the divisible goods setting.
\begin{itemize}
\item A variant of Mechanism~1 where the two agents eat the cake at possibly different speeds fixed in advance. We can also partition the cake into a number of parts, ``flip'' some of these parts horizontally, ``paste'' some of the parts together, and run the variant of Mechanism~1 with different speeds for different parts.
\item A \emph{dictatorship} mechanism, where one fixed agent takes an arbitrary superset of her valued piece, and the other agent takes the rest of the cake.
\item A \emph{picking} mechanism, where the cake is cut in advance into a number of parts, some designated to $a_1$ and the rest to $a_2$. Each agent has a set of offers, each offer corresponding to a subset of the parts designated to her, and the agent picks an offer that she likes most. (For example, if parts 1, 3, 5, and 6 are designated to $a_1$, a possible set of offers is $\{\{1,3\},\{3,5\},\{1,5,6\}\}$. In this case, if $a_1$ picks the offer $\{3,5\}$, then $a_2$ gets parts 1 and 6.)
\item An \emph{exchange} mechanism, where the cake is cut in advance into a number of parts, some initially allocated to $a_1$ and the rest to $a_2$. Some exchange deals are considered; each exchange deal involves a subset of $a_1$'s parts and a subset of $a_2$'s parts, where each part appears in at most one exchange deal. An exchange deal materializes if the exchange benefits both agents.
\end{itemize}
It is also possible to combine these mechanisms by partitioning the cake in advance and running different mechanisms on different parts. Among these classes of mechanisms, only the first class, with the agents eating at the same speed, is envy-free.

We now proceed to our impossibility results, which demonstrate that even though there are several ways to obtain truthfulness, none of them is compatible with fairness and other desirable properties. We begin by showing that anonymity is directly at variance with truthfulness.

\begin{theorem}
\label{thm:anonymous}
There does not exist a truthful and anonymous cake cutting mechanism for two agents, even when each agent values a single interval of the form $[0,x_i]$.
\end{theorem}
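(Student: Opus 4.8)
The plan is to argue by contradiction: suppose $\mathcal{M}$ is truthful, envy-free, and anonymous, and only ever feed it instances in which each $W_i$ is an interval of the form $[0,x_i]$.

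First I would examine the instance with $W_1=W_2=[0,1]$. Since $\mathcal{M}$ allocates the whole cake and is envy-free, each agent must receive value exactly $1/2$: each must get at least half of her total value $1$, and these two shares sum to $1$. Because every allocated piece is a finite union of intervals, whichever of the two pieces contains the point $0$ in fact contains an entire initial segment $[0,\epsilon]$ with $\epsilon>0$; by the symmetry supplied by anonymity we may assume this is agent $1$'s piece, call it $A$, so $[0,\epsilon]\subseteq A$.

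Next I would combine truthfulness with anonymity. Fix $s\le\epsilon$ and consider the instance $W_1=[0,s]$, $W_2=[0,1]$. Agent $1$ may misreport $W_1=[0,1]$; the mechanism then faces $W_1=W_2=[0,1]$ and hands agent $1$ the piece $A\supseteq[0,s]$, which is worth $s$ to her true valuation. Truthfulness forces her honest payoff to be at least $s$, hence exactly $s$ (it cannot exceed her total value $s$); so in the instance $[0,s]$ versus $[0,1]$ the $[0,s]$-agent must be allocated all of $[0,s]$, and by anonymity the same holds when the two slots are interchanged. Running the identical argument starting from the symmetric instance $W_1=W_2=[0,r]$ shows: for every $r>0$ there is a threshold $\delta_r\in(0,r/2]$ such that whenever one agent wants $[0,s]$ with $s\le\delta_r$ and the other wants $[0,r]$, the $[0,s]$-agent receives all of $[0,s]$; consequently the other agent's piece lies in $[s,1]$, so (for $s\le r$) she gets value at most $r-s$, though still at least $r/2$ by envy-freeness.

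The remaining task is to turn these rigid \emph{get-your-whole-interval} requirements into an outright contradiction, and I expect this to be the main obstacle. Every inequality that involves only the \emph{values} agents receive — feasibility, and the standing lower bound $1/2$ for a whole-cake agent — turns out to be satisfiable in isolation, so the contradiction has to be extracted from the \emph{positional} bookkeeping: which allocated piece occupies which part of $[0,1]$, and where its first gap lies. I would track, for each forced \emph{whole-interval} piece, its leftmost interval and first gap (these are again constrained by the finite-union-of-intervals structure), feed the leftmost interval back in to push up the thresholds $\delta_r$, and play off the location constraints arising from the two roles against each other and against the requirement that the whole cake be allocated, until two disjoint allocated pieces are forced to overlap on a set of positive length.
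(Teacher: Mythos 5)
Your opening moves are sound: in the symmetric instance $W_1=W_2=[0,r]$, envy-freeness together with full allocation does force both values to equal $r/2$; the finite-union structure does give one agent an initial segment $[0,\delta_r]$; and the misreport-to-$[0,r]$ argument combined with anonymity does show that a $[0,s]$-agent with $s\le\delta_r$ must receive all of $[0,s]$. (One small imprecision: anonymity constrains only the \emph{values} received under permuted inputs, not which physical piece lands in which slot, so you cannot ``assume by anonymity'' that agent $1$'s piece is the one containing the initial segment; you must run the manipulation argument for whichever agent actually receives it and then use anonymity to transfer the value conclusion to the other slot. This is easily repaired.)

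The genuine gap is that the proof stops exactly where the contradiction has to be produced, and the facts you have assembled do not suffice: as you yourself note, every constraint you have derived is about values and is individually satisfiable, and the positional endgame you sketch (forcing two disjoint pieces to overlap) is neither carried out nor obviously carriable. The paper's proof needs two further ideas that are absent from your sketch. First, a different rigidity step: starting from $W_1=W_2=[0,x]$, where some agent receives an interval $[x,x+f(x)]$ sticking out past the commonly valued region, it perturbs the \emph{right} endpoint (rather than shrinking one agent to a tiny prefix near $0$, as you do) and shows via two manipulation arguments plus anonymity that in the instance $W_1=W_2=[0,y]$ with $x<y\le x+f(x)$, \emph{both} agents must get exactly half of the subinterval $[x,y]$. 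Second, a counting argument that converts this rigidity into a contradiction: since for every $x\in[0,1)$ there is a rational $y\in(x,x+f(x))$, some fixed rational $y$ works for infinitely many $x$; yet the allocation for $W_1=W_2=[0,y]$ partitions $[0,y]$ into finitely many intervals, and within a single allocated interval at most one point $x$ can have the property that $[x,y]$ is split exactly in half (two such points $x<x'$ in the same piece would force that piece's owner to hold all of $[x,x']$ and simultaneously only half of it). Without an analogue of these two steps, your argument does not close.
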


\begin{proof}
Suppose that such a mechanism exists. Let $x\in(0,1)$ and $W_1=W_2=[0,x]$. Assume without loss of generality that in this instance, $a_1$ gets an interval containing point $x$ (perhaps as the left endpoint) and ending at point $x+f(x)>x$, possibly among other intervals. By anonymity, both agents must get half of the interval $[0,x]$.

If $W_1=[0,x+\epsilon]$ for some $\epsilon\in(0,f(x))$ and $W_2=[0,x]$, then $a_1$ must get value at least $x/2+\epsilon$, since otherwise she can manipulate by reporting $W_1=[0,x]$. Therefore $a_2$ gets value at most $x/2$ in this instance. By anonymity, if $W_1=[0,x]$ and $W_2=[0,x+\epsilon]$ for some $\epsilon\in(0,f(x))$, then $a_2$ gets value at least $x/2+\epsilon$ and $a_1$ at most $x/2$.

Now suppose that $W_1=W_2=[0,x+\epsilon]$ for some $\epsilon\in(0,f(x))$. By anonymity, both agents must get half of the interval $[0,x+\epsilon]$. If $a_1$ gets more than half of the interval $[x,x+\epsilon]$, then $a_2$ gets more than half of the interval $[0,x]$. In this case, if $W_2=[0,x]$, $a_2$ can manipulate by reporting $W_2=[0,x+\epsilon]$. So $a_1$ cannot get more than half of the interval $[x,x+\epsilon]$. By symmetry, neither can $a_2$. This means that both agents get exactly half of the interval $[x,x+\epsilon]$. In other words, for any $y\in(x,x+f(x))$, if $W_1=W_2=[0,y]$, then both agents receive exactly half of the interval $[x,y]$.

Next, consider the set
\[A:=\{(x,y)\in\mathbb{R}_{(0,1)}\times\mathbb{Q}_{(0,1)}\mid x<y<x+f(x)\}.\] This set is uncountable, since for each of the uncountably many $x$'s, there is at least one $y$ such that $(x,y)\in A$. If for each $y$ there only exist a finite number of $x$'s such that $(x,y)\in A$, this set would be countable, which we know is not the case. Hence there exists some $y$ such that  $(x,y)\in A$ for infinitely many $x$'s. Fix such a $y$.

Finally, suppose that $W_1=W_2=[0,y]$. For any of the infinitely many $x$'s such that $(x,y)\in A$, both agents must receive exactly half of the interval $[x,y]$. However, if the mechanism divides the interval $[0,y]$ into $k$ intervals in the allocation, then there can be at most one value of $x$ per interval, and therefore at most $k$ values in total, with this property. Since $k$ is finite, this gives us the desired contradiction.
\end{proof}

We remark that \emph{with} the free disposal assumption, Chen et al.'s mechanism is a truthful, envy-free, and anonymous cake cutting mechanism for two agents with arbitrary piecewise uniform valuations. The same authors showed that a particular extension of their mechanism, which allocates the desired pieces of the cake in the same way as their mechanism and allocates the undesired pieces of the cake in a certain simple way, is not truthful \cite[p.~296]{CLPP13}. Since any mechanism that allocates the desired pieces of the cake in this way is also anonymous, Theorem~\ref{thm:anonymous} shows that \emph{no} extension of Chen et al.'s mechanism can be truthful.

Next, we turn to the connected piece assumption and show that it is incompatible with truthfulness and positive share.

\begin{theorem}
\label{thm:connected}
There does not exist a truthful cake cutting mechanism for two agents that satisfies positive share and the connected piece assumption, even when each agent values a single interval of the form $[0,x_i]$.
\end{theorem}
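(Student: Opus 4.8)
The plan is to first collapse the structure of any such mechanism to a single cut point. For two agents, the connected piece assumption together with the requirement that all of $[0,1]$ be allocated forces the output $(A_1,A_2)$ to satisfy $\{A_1,A_2\}=\{[0,c],[c,1]\}$ for some $c=c(\mathbf{f})$; up to a measure-zero ambiguity at the endpoint, such a mechanism is nothing but a choice of cut point together with a choice of which agent gets the left piece. Specializing to the valuation class $W_i=[0,x_i]$ in the statement, I would then prove a structural lemma: whenever $x_1<x_2$, envy-freeness forces the agent with the shorter interval, $a_1$, to receive the left piece $[0,c]$, and forces $c\in[x_1/2,\,x_2/2]$. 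The proof is a two-case check. If $a_1$ received $[c,1]$, then $a_1$'s half-share of $[0,x_1]$ requires $c\le x_1/2$ while $a_2$'s half-share of $[0,x_2]$ requires $c\ge x_2/2$, impossible since $x_1<x_2$; if $a_1$ receives $[0,c]$, then $v_1(A_1)=\min(c,x_1)\ge x_1/2$ gives $c\ge x_1/2$, and $v_2(A_2)=x_2-c\ge x_2/2$ (here $c<x_2$, otherwise $v_2(A_2)=0$) gives $c\le x_2/2$.

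With the lemma in hand, the core of the argument is to pit two opposing misreports against each other at the single instance $W_1=[0,1/4]$, $W_2=[0,3/4]$; write $c^{*}:=c(1/4,3/4)\in[1/8,3/8]$ for the resulting cut. From $a_1$'s side: $a_1$ may report $W_1=[0,0.6]$; since $0.6<0.75$, the lemma says $a_1$ then receives $[0,c(0.6,0.75)]$ with $c(0.6,0.75)\ge 0.3>1/4$, so under her true valuation she obtains her entire desired interval, value $1/4$. Truthfulness forces her truthful value $\min(c^{*},1/4)$ to be at least $1/4$, hence $c^{*}\ge 1/4$. From $a_2$'s side: $a_2$ may report $W_2=[0,0.3]$; since $0.25<0.3$, the lemma says $a_1$ receives $[0,c(0.25,0.3)]$ with $c(0.25,0.3)\le 0.15$, so $a_2$ receives $[c(0.25,0.3),1]$ and under her true valuation obtains value $3/4-c(0.25,0.3)\ge 0.6$. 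Truthfulness forces her truthful value $3/4-c^{*}$ to be at least $0.6$, hence $c^{*}\le 0.15$. The bounds $c^{*}\ge 1/4$ and $c^{*}\le 0.15$ are contradictory, which finishes the proof.

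I expect the main obstacle to be the structural lemma, namely correctly identifying which agent receives which side and the exact range $[x_1/2,x_2/2]$ for the cut from envy-freeness alone; once that is established, the remainder is a clean two-sided squeeze on $c^{*}$. The only other point requiring care is degenerate boundary behavior: one should confirm that in every instance invoked, both pieces are nonempty, so that the ``$\{[0,c],[c,1]\}$'' description with $c\in(0,1)$ genuinely applies. Since all of the interval endpoints used ($1/4$, $0.3$, $0.6$, $3/4$) lie strictly inside $(0,1)$ and each agent's half-share is positive, envy-freeness already rules out an empty piece, so this is immediate and the argument goes through.
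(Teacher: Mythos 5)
Your argument is correct, and it takes a genuinely different route from the paper's. The paper works with the diagonal instances $W_1=W_2=[0,x]$: envy-freeness forces the split $\{[0,x/2],[x/2,1]\}$, a without-loss-of-generality choice fixes which agent gets the left piece, and truthfulness (via the intermediate instance $W_1=[0,x]$, $W_2=[0,y]$) forces the left/right ordering to \emph{reverse} for every $y>x$; chaining three values $x<y<z$ then yields a parity contradiction. You instead avoid the identical-valuation instances entirely: your structural lemma, which is correct as stated (the degenerate cases $c\geq x_1$ and $c\geq x_2$ are properly excluded because each agent's half-share is positive), pins down from envy-freeness alone that when $x_1<x_2$ the short-interval agent gets $[0,c]$ with $c\in[x_1/2,x_2/2]$, and then two opposing truthfulness constraints at the single instance $(1/4,3/4)$ --- the deviation to $0.6$ forcing $c^*\geq 1/4$ and the deviation to $0.3$ forcing $c^*\leq 0.15$ --- collide. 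Your version buys full explicitness (concrete numbers, no WLOG or symmetry bookkeeping, and a lemma that isolates exactly what envy-freeness contributes versus what truthfulness contributes); the paper's version is shorter and exposes the structural ``order-reversal'' phenomenon that underlies the impossibility. Both use only valuations of the form $[0,x_i]$, as the statement requires, and both handle the endpoint/empty-piece degeneracies the same way, so I see no gap in your proposal.
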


\begin{proof}
Suppose that such a mechanism exists. First, we fix $x\in(0,1)$ and assume that $W_1=[0,x]$. For each $y\in(0,1)$, if $W_2=[0,y]$, then by the connected piece assumption, $a_2$ gets either a ``left interval'' of the form $[0,z]$ or a ``right interval'' of the form $[z,1]$; by positive share, we have $z\in(0,1)$, so $a_2$ never gets an interval that is simultaneously a left interval and a right interval. We claim that there exist $f(x),g(x)>0$ with $f(x)+g(x)<x$ such that if $W_2=[0,y]$ for some $0<y<f(x)$, then $a_2$ gets a superset of $[0,y]$, while if $W_2=[0,y]$ for some $f(x)\leq y < f(x)+g(x)$, then $a_2$ gets exactly $[0,f(x)]$. To prove this claim, we consider two cases.
\begin{itemize}
\item \emph{Case 1}: For some $y\in(0,1]$, if $W_2=[0,y]$, then $a_2$ gets a right interval. We show that this right interval must be the same for all such $y$. Suppose for contradiction that $a_2$ gets $[z_1,1]$ when $W_2=[0,y_1]$ and gets $[z_2,1]$ when $W_2=[0,y_2]$, for some $z_1<z_2$. By positive share, we have $z_2<y_2$. When $W_2=[0,y_2]$, $a_2$ can manipulate by reporting $W_2=[0,y_1]$ and obtain a higher utility. So we must have $z_1=z_2$. Hence there exists $h(x)\in(0,1)$ such that whenever $a_2$ gets a right interval, she gets exactly the interval $[h(x),1]$. In particular, if $W_2=[0,y]$ for some $y\in(0,h(x))$, then positive share implies that $a_2$ must get a left interval.

Next, define
\[f(x) := \sup_{y\in(0,1]}\{z\mid a_2\text{ gets the interval } [0,z] \text{ when } W_2 = [0,y]\}.\]
From the previous paragraph, we have $f(x)>0$; moreover, positive share implies that $f(x)\leq x$. If $W_2=[0,y]$ for some $y\in(0,f(x)]$, then $a_2$ must get the entire interval $[0,y]$ (and possibly more); otherwise, by definition of $f(x)$, she can manipulate to get the more of the interval $[0,y]$. In particular, when $W_2=[0,f(x)]$, the definition of $f(x)$ implies that $a_2$ gets exactly the interval $[0,f(x)]$. By positive share, this also means that $f(x)<x$.

Now, let $g(x)\in(0,h(x))$ be such that $f(x)+g(x)<x$. Suppose that $W_2=[0,y]$ for some $f(x)<y<f(x)+g(x)$. If $a_2$ gets a right interval, this interval must be $[h(x),1]$, which yields value strictly less than $f(x)$ to her. However, she can manipulate by reporting $W_2=[0,f(x)]$ and get value $f(x)$. So $a_2$ must get a left interval of length at least $f(x)$. By definition of $f(x)$, this must be exactly the interval $[0,f(x)]$. Therefore $f(x)$ and $g(x)$ satisfy the conditions of our claim.

\item \emph{Case 2}: For any $y\in(0,1]$, if $W_2=[0,y]$, then $a_2$ gets a left interval. We define $f(x)$ as in the previous case and choose any $g(x)>0$ such that $f(x)+g(x)<x$. A similar argument as before shows that $f(x)$ and $g(x)$ satisfy the conditions of our claim.
\end{itemize}

Next, let $x'\in(f(x),x)$. Since our choice of $x$ was arbitrary, for $x'$ there also exist $f(x')$ and $g(x')$ satisfying analogous conditions. Suppose first that $f(x')<f(x)$. Let $y\in(f(x'),f(x')+g(x'))$ be such that $y<f(x)$, and assume that $W_2=[0,y]$. If $W_1=[0,x]$, then since $y<f(x)$, our claim implies that $a_2$ gets at least the interval $[0,y]$. However, $a_1$ can manipulate by reporting $W_1=[0,x']$ so that $a_2$ only gets the interval $[0,f(x')]$. So we cannot have $f(x')<f(x)$. Likewise, if $f(x')>f(x)$, we can take $y\in (f(x),f(x)+g(x))$ to arrive at a contradiction. Hence we must have $f(x')=f(x)$. Since our choice of $x'$ was arbitrary, this holds for any $x'\in (f(x),x)$.

Finally, let $x''\in(0,f(x))$. Assume that when $W_1=[0,x'']$ and $W_2=[0,f(x)]$, $a_1$ gets value $r>0$. Let $r'\in(0,\min\{r,x-f(x)\})$. If $W_1=[0,f(x)+r']$ and $W_2=[0,f(x)]$, then since $f(f(x)+r')=f(x)$, $a_2$ gets exactly the interval $[0,f(x)]$, which leaves value at most $r'$ to $a_1$. However, in this case $a_1$ can manipulate by reporting $W_1=[0,x'']$ to obtain value $r>r'$. This is a contradiction.
\end{proof}

Bei et al. \cite{BCHTW17} showed that a similar impossibility result with envy-freeness instead of positive share holds \emph{with} the free disposal assumption, but using the larger class of piecewise constant valuations. For the class of valuations that we consider in Theorem~\ref{thm:connected}, there exists a simple truthful and envy-free mechanism that always returns a connected allocation assuming free disposal. The mechanism works as follows: Assume that agent $a_i$ declares $W_i=[0,x_i]$ for $i=1,2$. If $x_1\geq x_2$, allocate the interval $[x_1/2,x_1]$ to $a_1$ and $[0,x_1/2]$ to $a_2$; otherwise allocate the interval $[0,x_2/2]$ to $a_1$ and $[x_2/2,x_2]$ to $a_2$. One can check that this mechanism satisfies the claimed properties.

We now consider position obliviousness and show the non-existence of a truthful and position oblivious cake cutting mechanism that satisfies positive share for two agents. In Appendix~\ref{app:impossibility}, we prove a statement that uses proportionality instead of positive share but holds for any even number of agents.

\begin{theorem}
\label{thm:position-oblivious}
There does not exist a truthful and position oblivious cake cutting mechanism for two agents that satisfies positive share.
\end{theorem}

\begin{proof}
Suppose that such a mechanism exists. First, we claim that for any $x\in(0,1/3)$, if $W_1=W_2=[0,x]$, then one of the agents gets length strictly less than $x$ from the interval $[x,1]$. Assume that this is not the case, and suppose that $a_1$ receives value $r>0$ in the instance where $W_1=W_2=[0,x]$. Since both agents get length at least $x$ from $[x,1]$, we may assume without loss of generality that $a_1$ gets $[x,2x]$ (perhaps among other pieces). This means that if $W_1=[0,2x]$ and $W_2=[0,x]$, $a_1$ must receive value at least $x+r$; otherwise she can manipulate by reporting $W_1=[0,x]$. We now show that $a_1$ must still receive value at least $x+r$ when $W_1=W_2=[0,2x]$. This suffices to establish our claim since by symmetry, $a_2$ also receives value more than $x$ when $W_1=W_2=[0,2x]$, which is impossible.

If it were not the case that $a_1$ receives value at least $x+r$ when $W_1=W_2=[0,2x]$, then $a_2$ receives value more than $x-r$ in this instance. In the instance where $W_1=[0,2x]$, and $W_2$ has length $x$ and contains a piece of length more than $x-r$ that $a_2$ values in the previous instance, $a_2$ must receive value more than $x-r$; otherwise she can manipulate by reporting $W_2=[0,2x]$. Hence $a_1$ receives value less than $x+r$ in this instance. However, we know that $a_1$ receives value at least $x+r$ when $W_1=[0,2x]$ and $W_2=[0,x]$. This is a contradiction with position obliviousness. So $a_1$ indeed receives value at least $x+r$ when $W_1=W_2=[0,2x]$, proving our claim.

Next, fix $x\in(0,1/3)$. Assume without loss of generality that when $W_1=W_2=[0,x]$, $a_2$ gets length strictly less than $x$ from the interval $[x,1]$, and that $a_2$ does not get any piece outside the interval $[1-x,1]$. Consider the instance where $W_1=[0,1-x]$ and $W_2=[0,x]$. Note that $a_1$ must get value more than $1-2x$; otherwise she can manipulate by reporting $W_1=[0,x]$. Now, consider the instance where $W_1=W_2=[0,1-x]$. If $a_2$ gets value at least $x$, let $B_2$ be a piece of length $x$ that $a_2$ values and gets. If $W_1=[0,1-x]$ and $W_2=B_2$, $a_2$ must get at least the entire $B_2$, and hence $a_1$ gets value at most $1-2x$. However, we know that $a_1$ receives value greater than $1-2x$ when $W_1=[0,1-x]$ and $W_2=[0,x]$, which contradicts position obliviousness. Hence $a_2$ gets value less than $x$ when $W_1=W_2=[0,1-x]$. This means that when $W_1=[0,1-x]$ and $W_2=[0,1]$, $a_2$ gets value less than $2x$; otherwise $a_2$ can manipulate by reporting $W_2=[0,1]$ when in fact $W_2=[0,1-x]$.

Finally, assume that $a_2$ receives value $s>0$ when $W_1=W_2=[0,1]$. Let $s':=s/3 \in(0,1/3)$. If $a_1$ manipulates by reporting $W_1=[0,1-s']$, $a_2$ gets value less than $2s'<s$. Hence this is a beneficial manipulation for $a_1$, a contradiction.
\end{proof}

As with the connected piece assumption, Bei et al. \cite{BCHTW17} showed a similar negative result for position obliviousness and envy-freeness \emph{with} the free disposal assumption but using the larger class of piecewise constant valuations. For piecewise uniform valuations, Chen et al.'s mechanism is truthful, envy-free, and position oblivious under the free disposal assumption.

We end this section by showing that our impossibility results also carry over to chore division. The idea is the same as the one used in Mechanism~2, except that here we use it to establish negative results.

\begin{corollary}
\label{cor:chore-impossibility}
There does not exist a truthful chore division mechanism for two agents that satisfies any one of the following set of conditions: (i) anonymity; (ii) connected piece assumption and positive share; (iii) position obliviousness and positive share.
\end{corollary}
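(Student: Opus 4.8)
The plan is to mimic the reduction behind Mechanism~2: show that any truthful and envy-free chore division mechanism with one of the extra properties gives rise, by swapping pieces, to a truthful and envy-free cake cutting mechanism with the corresponding property, so that the existence of the former contradicts Theorem~\ref{thm:anonymous}, Theorem~\ref{thm:connected}, or Theorem~\ref{thm:position-oblivious} respectively. Concretely, suppose $\mathcal{M}$ is a truthful, envy-free chore division mechanism. Define a cake cutting mechanism $\mathcal{M}'$ by running $\mathcal{M}$ on the reported density functions (interpreting them now as cake valuations) and then swapping the two agents' pieces. I would first record the three equivalences that make this work, exactly as in the proof of Theorem~\ref{thm:twoagents-chore}: (a) an agent minimizes her chore in the swapped allocation precisely when she maximizes it in the $\mathcal{M}$-allocation, so $\mathcal{M}'$ is truthful iff $\mathcal{M}$ is; (b) an agent gets at most half the chore in the swapped allocation iff she gets at least half in the $\mathcal{M}$-allocation, so (for two agents, where envy-freeness is proportionality) $\mathcal{M}'$ is envy-free iff $\mathcal{M}$ is; and (c) $\bigcup A_i = [0,1]$ is preserved under swapping, so $\mathcal{M}'$ still allocates the whole cake.

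Next I would check that each of the three side conditions transfers across the swap. For anonymity (i): a permutation $\sigma$ of two agents commutes with the swap operation, and the utilities $v_i(A_i)$ are unchanged by relabeling in the same way for both $\mathcal{M}$ and $\mathcal{M}'$, so $\mathcal{M}$ anonymous implies $\mathcal{M}'$ anonymous. For the connected piece assumption (ii): swapping two pieces does not change whether each piece is a single interval, so if $\mathcal{M}$ always outputs two intervals then so does $\mathcal{M}'$. For position obliviousness (iii): the indicator vector $L_{\mathbf{f}}$ depends only on the reported density functions and not on how $\mathcal{M}$ allocates, hence $L_{\mathbf{f}}=L_{\mathbf{f}'}$ forces the same post-swap values, so $\mathcal{M}'$ inherits position obliviousness from $\mathcal{M}$. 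In each case, $\mathcal{M}'$ is then a truthful, envy-free cake cutting mechanism for two agents with the extra property, contradicting the corresponding theorem from Section~\ref{sec:impossibility}; note that Theorems~\ref{thm:anonymous} and \ref{thm:connected} are already stated for $n=2$, and Theorem~\ref{thm:position-oblivious} covers $n=2$ by taking $k=1$.

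I do not expect any serious obstacle here; the whole argument is the dual of the construction that already appears in the proof of Theorem~\ref{thm:twoagents-chore}, run in the reverse direction. The only point needing a little care is the verification that position obliviousness really does survive the swap: one must note that the definition of position obliviousness only constrains the \emph{values} $v_i(A_i)$ the agents receive (not the positions of the allocated pieces), and that swapping is a fixed, allocation-independent operation, so equal values before swapping yield equal values after swapping. Similarly, for anonymity one should be slightly careful that the composition of "run $\mathcal{M}$" with "swap" interacts correctly with the permutation $\sigma$, but since $n=2$ the only nontrivial $\sigma$ is the transposition, which is exactly the swap, and the bookkeeping is immediate. With these remarks in place the corollary follows.
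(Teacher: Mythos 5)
Your proposal is correct and follows exactly the paper's own argument: run the hypothetical chore division mechanism on the cake valuations treated as chore valuations, swap the two pieces, and observe that truthfulness, envy-freeness, full allocation, and each of the three side properties survive the swap, contradicting Theorems~\ref{thm:anonymous}, \ref{thm:connected}, and \ref{thm:position-oblivious}. Your extra verifications (in particular that anonymity and position obliviousness transfer because, with the whole cake allocated, $v_i(A_{3-i})=v_i([0,1])-v_i(A_i)$) are sound and merely make explicit what the paper leaves implicit.
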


\begin{proof}
If there were a truthful  mechanism that satisfies one of the additional set of properties, we could obtain a cake cutting mechanism with the same properties as follows: First, we use the chore division mechanism to compute an initial allocation of the cake, treating the cake valuations as chore valuations. Then we swap the pieces of the two agents in this allocation. However, the existence of a cake cutting mechanism with these properties would contradict one of Theorems~\ref{thm:anonymous}, \ref{thm:connected}, and \ref{thm:position-oblivious}, respectively.
\end{proof}

\section{Bounds on the Efficiency Ratio}
\label{sec:competitive}

In this section, we investigate the efficiency of mechanisms satisfying various properties using the efficiency ratio, which is a standard measure for the welfare performance of mechanisms. In particular, we show that for both cake cutting and chore division, no truthful and envy-free mechanism attains a better efficiency ratio than the mechanisms we presented in Section~\ref{sec:mechanism-two}.

In contrast to the rest of the paper, in this section we do engage in interpersonal comparison of utility. For this reason, we assume that each agent has a value (or cost) of~1 for the entire resource, and scale the agent's utility for a subset of the resource proportionally.\footnote{For this to make sense, we also assume that each agent has a positive value (or cost) on a nonempty subset of the interval $[0,1]$.} In cake cutting, the \emph{social welfare} of an allocation is the sum of the values that the two agents receive, and the \emph{efficiency ratio} of a mechanism is the infimum, over all feasible instances, of the ratio between the social welfare of the allocation produced by the mechanism and the maximum social welfare. Similarly, in chore division, the \emph{social cost} of an allocation is the sum of the costs that the two agents incur, and the \emph{efficiency ratio} of a mechanism is the infimum, over all instances, of the ratio between the minimum social cost and the social cost of the allocation produced by the mechanism.\footnote{Here we assume that $\frac{0}{0}=1$. This is a reasonable choice because if the mechanism produces an allocation with zero social cost for a certain instance, there is no loss of efficiency in that instance.} Note that by definition, the efficiency ratio is always at most 1.

\subsection{Cake Cutting}

We start with cake cutting. Our first result gives a lower bound on the efficiency ratio of envy-free and Pareto optimal mechanisms.

\begin{theorem}
\label{thm:cake-EF-PO}
Any envy-free and Pareto optimal cake cutting mechanism for two agents has efficiency ratio at least $3/4$, and this bound is tight.
\end{theorem}

\begin{proof}
Fix an envy-free and Pareto optimal cake cutting mechanism. Consider an instance where the two agents value pieces of length $x\leq y$ respectively, and these pieces have an overlap of length $z\leq x$. The welfare-maximizing allocation gives the entire overlap to $a_1$, yielding welfare $1+\frac{y-z}{y}=2-\frac{z}{y}$. Next, we derive a lower bound for the welfare of the allocation that the mechanism produces. We consider two cases.

\begin{itemize}
\item \emph{Case 1:} $z\leq x/2$. Since the mechanism is Pareto optimal, the welfare achieved is at least the welfare when the entire overlap is allocated to $a_2$. This yields welfare $1+\frac{x-z}{x} = 2-\frac{z}{x}\geq\frac{3}{2}$. Hence the efficiency ratio is at least $\frac{3/2}{2}=\frac{3}{4}$.
\item \emph{Case 2:} $z>x/2$. The minimum welfare is achieved by minimizing $a_1$'s share of the overlap. Since the mechanism is envy-free, $a_1$ gets a piece of length at least $x/2$. This means that she gets length at least $\frac{x}{2}-(x-z) = z-\frac{x}{2}$ from the overlap. If she gets length exactly $z-\frac{x}{2}$ from the overlap, the social welfare is
\[\frac{1}{2}+\frac{y-z+x/2}{y}\geq \frac{3}{2}-\frac{z}{2y},\]
where the inequality holds since $x\geq z$. Hence the efficiency ratio is at least
\[\frac{\frac{3}{2}-\frac{z}{2y}}{2-\frac{z}{y}} = \frac{1}{2}\cdot\frac{3-\frac{z}{y}}{2-\frac{z}{y}}\geq \frac{3}{4}.\]
\end{itemize}
The tightness of the bound follows from the fact that Mechanism~1 has efficiency ratio $3/4$, which we show in Corollary~\ref{cor:cake-eating}.
\end{proof}

Next, we provide an upper bound on the efficiency ratio of truthful and envy-free mechanisms.

\begin{theorem}
\label{thm:cake-TR-EF}
Any truthful and envy-free cake cutting mechanism for two agents has efficiency ratio at most $3/4$, and this bound is tight.
\end{theorem}

\begin{proof}
Fix a truthful and envy-free cake cutting mechanism. Consider the instance where $W_1=W_2=[0,\epsilon]$ for some small $\epsilon>0$. Since the mechanism is envy-free, it must allocate exactly half of the interval $[0,\epsilon]$ to each agent. Assume without loss of generality that the mechanism gives $a_1$ at least half of the interval $[\epsilon,1]$; let $B_1\subseteq [\epsilon,1]$ be the piece that $a_1$ receives. Next, consider the instance where $W_1=[0,\epsilon]\cup B_1$ and $W_2=[0,\epsilon]$. By truthfulness, $a_1$ must receive $B_1$ and half of the interval $[0,\epsilon]$, and therefore $a_2$ can only receive half of $[0,\epsilon]$. For small $\epsilon$, the optimal social welfare approaches 2, while the social welfare of the allocation produced by the mechanism is at most $1/2+1 = 3/2$. Hence the efficiency ratio of the mechanism is at most $3/4$, as claimed.

The tightness of the bound follows from the fact that Mechanism~1 has efficiency ratio $3/4$, which we show in Corollary~\ref{cor:cake-eating}.
\end{proof}


The bounds in Theorems~\ref{thm:cake-EF-PO} and \ref{thm:cake-TR-EF} yield the following corollary, which immediately implies the tightness of both bounds. Moreover, it follows from the corollary that no truthful and envy-free cake cutting mechanism has a better efficiency ratio than Mechanism~1.

\begin{corollary}
\label{cor:cake-eating}
Any truthful, envy-free, and Pareto optimal cake cutting mechanism for two agents has efficiency ratio exactly $3/4$. In particular, the efficiency ratio of Mechanism~1 is $3/4$.
\end{corollary}

Next, we prove tight upper and lower bounds for mechanisms satisfying each of envy-freeness and Pareto optimality.

\begin{theorem}
\label{thm:cake-EF}
Any envy-free cake cutting mechanism for two agents has efficiency ratio at least $1/2$ and at most $\frac{2+\sqrt{3}}{4}\approx 0.933$, and both bounds are tight.
\end{theorem}

\begin{proof}
We consider the lower and upper bounds in turn.

\emph{Lower bound:} An envy-free mechanism always gives each agent a utility of at least $1/2$, so the social welfare of the resulting allocation is at least 1. On the other hand, the optimal social welfare is at most 2. Hence the efficiency ratio is at least $1/2$. To show that this bound is tight, consider a mechanism that always outputs the allocation output by the Mechanism~1, except when $W_1=[0,1]$ and $W_2=[0,\epsilon]$ for some small $\epsilon>0$. For this instance, the mechanism gives the piece $[\epsilon/2,(1+\epsilon)/2]$ to $a_2$ and the rest of the cake to $a_1$. This is an envy-free allocation, and both agents receive utility $1/2$. On the other hand, for small $\epsilon$, the optimal welfare in this instance approaches 2.

\emph{Upper bound:} Caragiannis et al.~\cite[Theorem~9]{CaragiannisKaKa11} showed that for two agents with arbitrary (not necessarily piecewise uniform) valuations, the maximum welfare of an envy-free allocation is no less than $\frac{2+\sqrt{3}}{4}$ times the optimal social welfare. Hence a mechanism that always returns an envy-free allocation with maximum welfare has efficiency ratio at least $\frac{2+\sqrt{3}}{4}$. To show that this bound is tight, we adapt the example of Caragiannis et al. to piecewise uniform valuations. Consider the instance where $W_1=[0,\sqrt{3}-1]$ and $W_2=[0,1]$. The welfare-maximizing allocation gives the entire overlap to $a_1$, yielding welfare $1+(2-\sqrt{3})=3-\sqrt{3}$. On the other hand, any envy-free allocation gives $a_1$ a piece of length at most $1/2$ from the overlap. So the maximum welfare of an envy-free allocation is $\frac{1/2}{\sqrt{3}-1}+\frac{1}{2}=\frac{3+\sqrt{3}}{4}$. Hence the efficiency ratio of any envy-free mechanism is at most $\frac{(3+\sqrt{3})/4}{3-\sqrt{3}}=\frac{2+\sqrt{3}}{4}$, as claimed.
\end{proof}

\begin{theorem}
\label{thm:cake-PO}
Any Pareto optimal cake cutting mechanism for two agents has efficiency ratio at least $1/2$ and at most $1$, and both bounds are tight.
\end{theorem}

\begin{proof}
We consider the lower and upper bounds in turn.

\emph{Lower bound:} Consider an instance where the two agents value pieces of length $x\leq y$ respectively, and these pieces have an overlap of length $z\leq x$. The welfare-maximizing allocation gives the entire overlap to $a_1$, yielding welfare $1+\frac{y-z}{y}=2-\frac{z}{y}$. On the other hand, the lowest welfare of a Pareto optimal allocation is achieved by giving the entire overlap to $a_2$, which yields welfare $1+\frac{x-z}{x}=2-\frac{z}{x}$. Hence the efficiency ratio of a Pareto optimal mechanism is at least $\frac{2-z/x}{2-z/y}\geq\frac{2-1}{2-0}=\frac{1}{2}$. To show that this bound is tight, consider a mechanism that always outputs the allocation output by Mechanism~1, except when $W_1=[0,1]$ and $W_2=[0,\epsilon]$ for some small $\epsilon>0$. For this instance, the mechanism gives the entire cake to $a_1$. This is a Pareto optimal allocation with social welfare 1. On the other hand, for small $\epsilon$, the optimal welfare approaches 2.

\emph{Upper bound:} By definition, the efficiency ratio of any mechanism cannot exceed 1. Moreover, a mechanism that always returns an allocation with maximum social welfare is Pareto optimal and has efficiency ratio exactly 1.
\end{proof}

\subsection{Chore Division}

We now address chore division and exhibit a surprising contrast to cake cutting. Our first result shows that unlike for cake cutting, for chore division no truthful and envy-free mechanism has a positive efficiency ratio. Since Mechanism~2 is truthful and envy-free, the result applies to the mechanism as well.

\begin{theorem}
\label{thm:chore-TR-EF}
Any truthful and envy-free chore division mechanism for two agents has efficiency ratio 0. In particular, the efficiency ratio of Mechanism~2 is 0.
\end{theorem}

\begin{proof}
Fix a truthful and envy-free chore division mechanism. Consider the instance where $W_1=W_2=[0,\epsilon]$ for some small $\epsilon>0$. Since the mechanism is envy-free, it must allocate exactly half of the interval $[0,\epsilon]$ to each agent. Assume without loss of generality that the mechanism gives $a_2$ at least half of the interval $[\epsilon,1]$; let $B_2\subseteq [\epsilon,1]$ be the piece that $a_2$ receives. Next, consider the instance where $W_1=[0,\epsilon]\cup B_2$ and $W_2=[0,\epsilon]$. By truthfulness, $a_1$ must receive at most half of the interval $[0,\epsilon]$, and therefore $a_2$ receives at least half of the interval $[0,\epsilon]$. For small $\epsilon$, the minimum social cost in this instance approaches 0, while the social cost of the allocation produced by the mechanism is at least $1/2$. Hence the efficiency ratio of the mechanism is 0, as claimed.
\end{proof}

Theorem~\ref{thm:chore-TR-EF} also implies that any combination of truthfulness, envy-freeness, and Pareto optimality does not suffice to guarantee a positive efficiency ratio. On the other hand, a mechanism that always returns an allocation with minimum social cost is Pareto optimal and has the highest possible efficiency ratio of 1. Our next result shows that an envy-free mechanism can achieve an efficiency ratio of up to $8/9$.

\begin{theorem}
\label{thm:chore-EF}
Any envy-free chore division mechanism for two agents has efficiency ratio at most $8/9$, and this bound is tight.
\end{theorem}

\begin{proof}
Caragiannis et al.~\cite[Theorem~17]{CaragiannisKaKa11} showed that for two agents with arbitrary (not necessarily piecewise uniform) valuations, the minimum cost of an envy-free allocation is no more than $9/8$ times the optimal social cost. Hence a mechanism that always returns an envy-free allocation with minimum cost has efficiency ratio $8/9$. To show that this bound is tight, we adapt the example of Caragiannis et al. to piecewise uniform valuations. Consider the instance where $W_1=[0,2/3]$ and $W_2=[0,1]$. The cost-minimizing allocation gives the entire overlap to $a_2$, yielding cost $2/3$. On the other hand, any envy-free allocation gives $a_2$ a piece of length at most $1/2$ from the overlap. So the minimum cost of an envy-free allocation is $\frac{1/6}{2/3}+\frac{1}{2}=\frac{3}{4}$. Hence the efficiency ratio of any envy-free mechanism is at most $\frac{2/3}{3/4}=\frac{8}{9}$, as claimed.
\end{proof}

\section{Extensions to Multiple Agents}
\label{sec:mechanism-many}

In this section, we consider the general setting where we allocate the resource among any number of agents. We assume that each agent $a_i$ only values (or has cost on) the interval $[0,x_i]$ for some $x_i$. Such valuations may appear in a scenario where the agents are dividing machine processing time: agent $a_i$ has a deadline $x_i$ for her jobs, so she would like to maximize the processing time she gets before $x_i$ but has no value for any processing time after $x_i$.
We also remark that the example used to illustrate that removing the free disposal assumption can be problematic consists of \emph{two} agents whose valuations belong to this class \cite[p.~296]{CLPP13}. Hence, designing a fair and truthful algorithm is by no means an easy problem even for this valuation class.

We first describe the cake cutting mechanism.

\begin{framed}
\noindent
\textbf{Mechanism~3} (for cake cutting among $n$ agents) \\

\noindent
\emph{Step~1}: If there is one agent left, the agent gets the entire remaining cake. Else, assume that there are $k\geq 2$ agents and length $\ell$ of the cake left. Find the maximum $x\in[0,\ell]$ such that agent $i$ values the entire interval $[(i-1)x,ix]$ for all $i=1,2,\dots,k$, and allocate the interval $[(i-1)x,ix]$ to agent $i$. \\

\noindent
\emph{Step~2}: The agent whose right endpoint of her allocated interval coincides with the right endpoint of her valued piece exits the process. If there are more than one such agent, choose the one with the lowest number.\footnotemark \\

\noindent
\emph{Step~3}: Renumber the remaining agents in the same order starting from $1$, and relabel the left endpoint of the remaining cake as point $0$. Return to Step~1.
\end{framed}
\footnotetext{There always exists at least one such agent, since otherwise the value of $x$ in Step~1 can be increased.}

\begin{theorem}
\label{thm:many-cake}
Let $n$ be any positive integer. Mechanism~3 is a truthful, envy-free, and Pareto optimal cake cutting mechanism for $n$ agents, if each agent only values a single interval of the form $[0,x_i]$.
\end{theorem}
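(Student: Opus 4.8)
The plan is to verify the three properties in increasing order of difficulty, with truthfulness being the main obstacle.

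For \emph{Pareto optimality}, the first step is a structural lemma: whenever an agent exits in Step~2 of some round, the right endpoint of her valued interval lies weakly to the left of the left endpoint of the cake remaining after that round. Indeed, if that agent sits at position $p$ among the $k\ge p$ agents of the round, the cut is $x$, and the round's left endpoint is at absolute position $c$, then her valued interval ends at the right endpoint of her slice, i.e.\ at $c+px$, while the remaining cake starts at $c+kx\ge c+px$. Consequently the last agent, who receives the entire remaining segment, is the only agent who values any point of it and hence values all of it; and every slice handed out in a round is, by the choice of the cut $x$ in Step~1, contained in the recipient's valued interval. So every point valued by some agent is allocated to an agent who values it. The second step is the standard observation that, once the whole resource is allocated, this forces $\sum_i v_i(A_i)$ to equal the length of $\bigcup_i[0,x_i]=[0,\max_i x_i]$, which upper-bounds $\sum_i v_i(B_i)$ over all allocations $B$ (by disjointness of the $B_i$); hence no allocation can Pareto-dominate the output.

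For \emph{envy-freeness}, fix $a_i$ and let $R_i$ be the round in which she exits (or the last round, if $a_i$ is the last agent). In each round $r\le R_i$ with at least two agents present, $a_i$ receives a slice of length exactly $x^{(r)}$ lying inside $[0,x_i]$, so $v_i(A_i)\ge\sum_r x^{(r)}$ over such rounds. For any $j\ne i$ and any round $r$, the portion of $A_j$ allocated in round $r$ has length at most $x^{(r)}$ when $r\le R_i$ and $a_j$ is present, and contributes nothing to $v_i$ when $r>R_i$, since by the structural lemma $[0,x_i]$ then lies entirely to the left of the cake still in play (in particular this kills any contribution from the tail handed to the last agent). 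Summing over rounds yields $v_i(A_j)\le\sum_r x^{(r)}\le v_i(A_i)$.

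\emph{Truthfulness} is where the real work lies. As in the proof of Theorem~\ref{thm:twoagents-cake}, reporting intervals one does not value only wastes a deviating agent's slices, and reporting a ``gap'' inside the true interval cannot help; so it suffices to show that $a_j$ with true interval $[0,x_j]$ cannot gain by reporting $[0,z]$ with $z\ne x_j$. Fix the other reports and run the mechanism in parallel on $z$ and on $x_j$. The key claim is that the two runs are identical---same cut, same exiting agent each round---up to the first round in which, in the $z$-run, $a_j$'s own demand constraint is binding: before that round her slice stays strictly inside her reported interval, so the cut is pinned down by the other (identical) constraints and the length constraint, and $a_j$ exits in neither run. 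At the first divergence round: if $z>x_j$, one shows $a_j$'s constraint is binding in the truthful run, so there her slice ends exactly at $x_j$ and she gains nothing further of value, while in the $z$-run the corresponding slice pokes past $x_j$ and only its part inside $[0,x_j]$ counts, so the truthful value is at least as large; if $z<x_j$, her $z$-run slice ends exactly at $z$ and she effectively exits there, whereas in the truthful run she is still present and the cut in that round is at least as large, so the truthful value is again at least as large. The main obstacle throughout is the boundary bookkeeping: ties in Step~2, ``degenerate'' rounds where the cut is forced to $0$ because some agent's remaining demand has reached $0$, and---most delicate---the scenario in which a deviation causes $a_j$ to become the last agent and collect a long tail of the cake; handling this last case requires a lower bound on her truthful value more refined than the partial sum of cuts used above.
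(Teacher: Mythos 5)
Your envy-freeness and Pareto optimality arguments are correct and are essentially expanded versions of the paper's (the paper disposes of both in a few lines; your structural lemma that an exiting agent's valued interval lies weakly to the left of the cake remaining after her round, and the welfare-maximization argument, make the same observations explicit). The problem is truthfulness, and it is precisely the case you name and then leave open: an over-report $z>x_j$ that keeps $a_j$ in the process longer, possibly until she becomes the last agent and collects the tail. As written, your proof of truthfulness does not close.

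Moreover, the missing ingredient is not, as you suggest, ``a lower bound on her truthful value more refined than the partial sum of cuts''; it is an upper bound on the deviating value, and it falls out of the divergence setup you already have. Let $r$ be the first round where the two runs differ. The state entering round $r$ (left endpoint $c$, set and ordering of the $k$ remaining agents) is identical in both runs, and the only constraint that differs is $a_j$'s; for $z>x_j$ this forces $a_j$'s constraint to be binding in the truthful run, so $x_j=c+p\,x^{(r)}$ where $p$ is her position and $x^{(r)}$ the truthful cut, she exits truthfully in round $r$, and the deviating cut satisfies $\hat{x}\ge x^{(r)}$. Everything she receives in the $z$-run from round $r$ onward is contained in $[c+(p-1)\hat{x},1]$, whose intersection with $[0,x_j]$ has length at most $x_j-c-(p-1)\hat{x}\le x^{(r)}$; and since the cake remaining after round $r$ begins at $c+k\hat{x}\ge c+p\,x^{(r)}=x_j$, every later slice \emph{and the entire tail} are worthless to her. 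Hence her deviating value is at most $\sum_{s<r}x^{(s)}+x^{(r)}$, which is exactly her truthful value --- the tail is not a separate delicate case but is subsumed. This is the content of the paper's remark that the deviator ``can only get less value from the allocated interval as its right endpoint moves past $x_i$'' and ``has no more valued intervals to the right''; supplying it (together with the symmetric, easier computation for $z<x_j$, where $\hat{x}\le x^{(r)}$ and she exits in the $z$-run) is what completes the argument.
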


\begin{proof}
First, for truthfulness, there are two types of manipulation: moving $x_i$ to the left and to the right. Moving $x_i$ to the left can only cause $a_i$ to quit the process early when she could have gained more by staying on. On the other hand, if moving $x_i$ to the right causes the allocation to change in some round of Step 1, the agent can only get less value from the allocated interval as its right endpoint moves past $x_i$. Moreover, since she has no more valued intervals to the right, she cannot make up for the loss.

Next, for envy-freeness, if an agent is no longer in the process, she has no more piece of value. During the process, in each round all remaining agents receive an interval of the same length. Since each agent values the entire interval that she receives, she does not envy any other agent.

Finally, our mechanism allocates any interval valued by at least one agent to an agent who values it. This establishes Pareto optimality.
\end{proof}

We remark that constructing truthful and envy-free mechanisms that work beyond the class of valuations in Theorem~\ref{thm:many-cake} appear to be highly nontrivial.
For instance, even if every agent only values a single interval (not necessarily starting at $0$), then a mechanism that tries to find valued intervals of equal length according to the agent ordering no longer works: if $W_1=[0.5,1]$ and $W_2=[0,0.5]$, such a mechanism would not be able to allocate any of the cake.
Alternatively, one could try a generalization of Mechanism~3 that finds the smallest $x$ such that $ix$ is the right endpoint of some agent $i$'s valued interval. 
However, if $W_1=W_2=[0.5,1]$, then this mechanism would allocate $[0,0.5]$ to agent 1 and $[0.5,1]$ to agent 2, thereby leaving agent 1 envious.

Unlike in the case of two agents, there is no simple reduction between cake cutting and chore division in the general case. Nevertheless, our next result shows a truthful and proportional chore division mechanism for any number of agents.\footnote{We are grateful to Bo Li for pointing out an error in an earlier version of this mechanism.} We were not able to strengthen the proportionality guarantee to envy-freeness and leave it as an interesting open question for future research.

\begin{framed}
\noindent
\textbf{Mechanism~4} (for chore division among $n$ agents) \\

\noindent
\emph{Step~1}: Let $a_1$ take the piece $[0,x_1/n]\cup [x_1,1]$. If some other agent has no cost on parts of the interval $[0,x_1/n]$, give those parts to the agent. (If there are several such agents, allocate the parts arbitrarily.) \\

\noindent
\emph{Step~2}: Proceed similarly with the next agent up to $a_{n-1}$ and the remaining chore; agent $a_i$ takes the leftmost interval with value $y_i/n$ as well as any piece for which she has no cost, where $y_i:=\min(x_1,x_2,\dots,x_i)$. (If $a_i$ has cost less than $y_i/n$ left, she takes the entire remaining chore.) \\

\noindent
\emph{Step~3}: Agent $a_n$ takes all of the remaining chore.
\end{framed}

\begin{theorem}
\label{thm:many-chore}
Let $n$ be any positive integer. Mechanism~4 is a truthful, proportional, and Pareto optimal chore division mechanism for $n$ agents, if each agent only has cost on a single interval of the form $[0,x_i]$.
\end{theorem}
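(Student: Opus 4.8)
\textbf{Proof plan for Theorem~\ref{thm:many-chore}.}
The plan is to verify the three properties in turn, reusing the structural idea that each agent $a_i$ (for $i<n$) is handed the leftmost sub-piece of the remaining chore worth exactly $x_i/n$ to her, together with worthless leftovers, while $a_n$ is stuck with whatever remains. First I would record the basic invariant: after $a_1,\dots,a_i$ have been served, the chore that has been disposed of has total $a_j$-value at most $x_j/n$ for each served $j$ (exactly $x_j/n$ unless $a_j$ exhausted the chore), and crucially the part of $a_i$'s valued interval $[0,x_i]$ that is still unallocated when it is $a_i$'s turn has length at least $x_i - \sum_{j<i} x_j/n \ge x_i - (i-1)x_i/n \ge x_i/n$, using $x_j \le x_i$ for $j<i$ is \emph{not} automatic, so the right bound to use is simply that at most $x_i(i-1)/n \le x_i(n-1)/n$ of $[0,x_i]$ can have been taken by earlier agents, leaving at least $x_i/n$; hence $a_i$ can indeed always take a leftmost interval of exactly value $x_i/n$ (or all that is left, which is then $\le x_i/n$). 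This makes the mechanism well-defined and shows $v_i(A_i) \le x_i/n = v_i([0,1])/n$ for every $i<n$, giving proportionality for all but the last agent.

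For $a_n$, proportionality is the crux. Here I would argue that the total $a_n$-value consumed by $a_1,\dots,a_{n-1}$ is at least $x_n(n-1)/n$, so that $a_n$ is left with at most $x_n/n$. The reason is that whenever an earlier agent $a_i$ takes her leftmost piece of her-value $x_i/n$, the part of that piece lying in $[0,x_n]$ is eaten at ``full speed'' from the left; more carefully, at the moment $a_i$ is served, all of $[0,\cdot]$ up to some frontier has already been allocated, and $a_i$ extends that frontier (within $[0,x_i]$) by exactly $x_i/n$ in $a_i$-length, which — since on $[0,x_i]$ every agent's density is $1$ — advances the allocated prefix of $[0,1]$ by exactly $x_i/n$ as well, except possibly for worthless-to-$a_i$ gaps that are handed off but still get allocated. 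The key point is that no unallocated interval is ever ``skipped'': Step~1's clause about giving worthless parts away ensures the allocated region of $[0,1]$ grows as a genuine prefix (up to the eventual right-hand leftovers), so after $n-1$ rounds a prefix of $[0,1]$ of length at least $x_1/n + \dots + x_{n-1}/n \ge (n-1)x_n/n$ (again using that each $x_i$ contributes $x_i/n$, and one must handle the case $x_i < x_n$ versus $x_i \ge x_n$ by noting the prefix swallows $\min(x_i,\text{remaining})$ worth from $[0,x_n]$) has been removed, so $v_n(A_n) \le x_n - (n-1)x_n/n = x_n/n$. I expect this accounting — showing the allocated set is always a prefix of $[0,1]$ of the right length and that the left parts of $[0,x_n]$ are genuinely consumed — to be the main obstacle, and it may require a careful induction rather than the hand-wave above.

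Truthfulness I would handle agent by agent, as in Theorem~\ref{thm:many-cake}. For $a_i$ with $i<n$, reporting a larger $x_i$ only makes her designated leftmost-piece longer in real terms (value $x_i/n$ of a report $x_i > $ truth, but some of that is now genuinely valued chore she did not have to take) and cannot help; reporting a smaller $x_i' < x_i$ shrinks her assigned value to $x_i'/n$ but the interval she receives extends up to her reported frontier, and the part beyond $x_i'$ within her true $[0,x_i]$ is now either allocated to her anyway as ``worthless'' (it is not worthless to her truly) or pushed to a later agent — I would show that in the worst case she ends with value $\ge x_i/n$ regardless, and never less than the truthful $x_i/n$, so the deviation is not strictly profitable; in fact I would argue she can only weakly increase her value, matching the ``move left / move right'' dichotomy of Mechanism~3. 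For $a_n$, there is nothing to report that changes her allocation since she is served last and takes everything remaining; her only lever is $x_n$, which does not enter Steps~1--2 at all, so she is trivially truthful. Finally, Pareto optimality follows exactly as in the earlier proofs: every interval valued by at least one agent is allocated to some agent who values it — in Step~1 the worthless-to-$a_1$ parts of $[0,x_1/n]$ are redistributed to agents who do value them when possible, and the interval $[x_1,1]$ taken by $a_1$ is by definition valued by no agent with a smaller threshold, so in particular an interval valued by nobody is the only kind that can end up with an indifferent owner — hence no reallocation can make someone better off without hurting someone else.
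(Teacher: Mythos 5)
Your overall outline matches the paper's: cap each $a_i$'s value ($i<n$) at $x_i/n$, bound what is left for $a_n$ by accounting for what the first $n-1$ agents consume, argue truthfulness via the move-left/move-right dichotomy, and get Pareto optimality from worthless pieces going to indifferent agents. However, your proportionality argument for $a_n$ has a genuine gap that your own hedging does not repair. Your bound rests on the chain $x_1/n+\dots+x_{n-1}/n \ge (n-1)x_n/n$, which requires $x_i\ge x_n$ for all $i<n$ and is false otherwise; when some $x_i<x_n$, the allocated \emph{prefix} alone simply cannot certify proportionality. Take $n=2$, $x_1=0.1$, $x_2=1$: the prefix consumed by $a_1$ has length only $0.05$, so tracking the prefix gives only $v_2(A_2)\le 0.95$, nowhere near $x_2/2=0.5$. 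What actually saves the mechanism is that such an agent $a_i$ also takes the entire tail $[x_i,1]$, which removes $[x_i,x_n]$ from $a_n$'s reach; in the example $a_1$ takes $[0,0.05]\cup[0.1,1]$ and $a_2$ is left with only $[0.05,0.1]$. The paper handles this with a short case split on $x=\min(x_1,\dots,x_{n-1})$ versus $x_n$: if $x_n\le x$ the prefix argument works (each of the first $n-1$ agents removes at least a $1/n$ fraction of $[0,x_n]$), while if $x_n>x$ then both $[0,(n-1)x/n]$ and $[x,1]$ are unavailable to $a_n$, leaving her at most $x/n<x_n/n$. Your parenthetical ``$\min(x_i,\text{remaining})$'' patch does not supply this second ingredient, so the case $x_i<x_n$ remains unproved in your write-up.

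Two smaller issues. In the truthfulness argument, ``pushed to a later agent'' cannot happen --- the mechanism forces $a_i$ to take every remaining piece she reports as worthless --- and the clean accounting is the paper's: underreporting by $y$ saves at most $y/n$ on the designated leftmost piece but forces her to absorb $[x_i-y,x_i]$ at the end. In the Pareto optimality paragraph you state the cake-cutting condition (``every interval valued by at least one agent is allocated to someone who values it''); for chores the relevant condition is the opposite --- every interval that some agent finds worthless must go to such an agent --- and Step~1 transfers the parts of $[0,x_1/n]$ that are worthless to \emph{other} agents to those agents, not the parts worthless to $a_1$.
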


\begin{proof}
We begin with truthfulness. First, any agent who has no cost on some piece that the mechanism initially allocates to another agent has no incentive not to take the piece. Apart from this, agent $a_n$ has no control over her allocation, so the mechanism is truthful for her. For any other agent, there are two types of manipulation: moving $x_i$ to the left and to the right. Moving $x_i$ to the right can only increase the value of the piece that $a_i$ has to take. If $a_i$ moves $x_i$ to the left while staying to the right of $y_{i-1}=\min(x_1,x_2,\dots,x_{i-1})$, nothing changes. Else, she moves $x_i$ by an amount $z$ on the left of $y_{i-1}$. In this case, she can save a cost of at most $z/n$ but has to take a piece of cost $z$ at the end. So $a_i$ does not have a profitable manipulation.

We now consider proportionality. Each agent $a_i$ up to $a_{n-1}$ gets a piece of cost at most $y_i/n\leq x_i/n$. For $a_n$, we consider two cases. If $x_n\leq y_{n-1}$, then each of the first $n-1$ agents takes at least $1/n$ of the interval $[0,x_n]$, so at most $1/n$ of this interval is left for $a_n$. Else, we have $x_n>y_{n-1}$. The intervals $[0,(n-1)y_{n-1}/n]$ and $[y_{n-1},1]$ will not be left to $a_n$, meaning that $a_n$ incurs a cost of at most $y_{n-1}/n<x_n/n$.

Finally, our mechanism allocates any interval for which some agent has no cost to one such agent. This establishes Pareto optimality.
\end{proof}

\section{Conclusion and Future Work}

In this paper, we study the problem of fairly dividing a heterogeneous resource in the presence of strategic agents and demonstrate the powers and limitations of truthful mechanisms in this setting. An immediate question that arises is whether the mechanisms in Section~\ref{sec:mechanism-two} can be generalized to work for any number of agents with piecewise uniform valuations. While our results in Section~\ref{sec:mechanism-many} provide a partial answer to this question, extending to the general setting seems to require a drastically different idea. Indeed, while other examples of truthful mechanisms given in Section~\ref{sec:impossibility} can be generalized to multiple agents, these extensions do not satisfy envy-freeness or even positive share. Of course, it could also be that there is an impossibility result once we move beyond the case of two agents.

Another direction for future work is to allow agents to have valuations from a larger class. A natural next step would be to consider the class of \emph{piecewise constant} valuations, in which an agent values each interval uniformly but can have different marginal utilities for different intervals. Intriguingly, it is not known whether there exists a deterministic truthful and envy-free mechanism even for two agents with piecewise constant valuations, either with or without the free disposal assumption. Again, the question is still open even if we relax envy-freeness to positive share; it does not seem clear whether (significantly) relaxing the fairness notion helps in designing a truthful mechanism for piecewise constant valuations.

Finally, our results in Section~\ref{sec:competitive} leave open the question of determining the best efficiency ratio that can be achieved by a truthful mechanism. This quantity has been called the \emph{price of truthfulness} by Maya and Nisan~\cite{MN12}, who studied it in the setting where free disposal is allowed. While we have established tight upper bounds on the efficiency ratio of truthful and envy-free mechanisms for both cake cutting and chore division, our current techniques are not sufficient to show that the bounds remain tight in the absence of envy-freeness. If a truthful mechanism with a higher efficiency ratio were to exist, it would be useful in situations where only resistance to strategic behavior is desired.

\section*{Acknowledgments}

This work was partially supported by the European Research Council (ERC) under grant number 639945 (ACCORD) and by a Stanford Graduate Fellowship. A preliminary version of the paper appeared in Proceedings of the 27th International Joint Conference on Artificial Intelligence. The authors thank Felix Brandt, Dominik Peters, and Christian Stricker for helpful discussions, and the anonymous reviewers for valuable comments.

\bibliographystyle{abbrv}
\bibliography{refs-full}

\appendix

\section{Additional Impossibility Result}
\label{app:impossibility}

\begin{theorem}
\label{thm:position-oblivious-app}
Let $n=2k$ for some positive integer $k$. There does not exist a truthful, proportional, and position oblivious cake cutting mechanism for $n$ agents.
\end{theorem}

\begin{proof}
Suppose that such a mechanism exists. For simplicity, we represent the cake by the interval $[0,4k^2+k]$; this can be easily normalized back to $[0,1]$.

First, consider the instance where $W_{2i-1}=W_{2i}=[i-1,i]$ for $i=1,2,\dots,k$. Since the interval $[k,4k^2+k]$ is of length $4k^2$ and there are $2k$ agents, some agent gets value more than $2k-1$ from the interval. Assume without loss of generality that $a_1$ is one such agent, and that $a_1$ gets the interval $[k,3k-1]$. Since the mechanism is proportional, $a_1$ must get value at least $1/2k$ from the interval $[0,1]$ as well.

Next, consider the instance where $W_1=[0,1]\cup[k,3k-1]$, $W_2=[0,1]$, and $W_{2i-1}=W_{2i}=[i-1,i]$ for $i=2,3,\dots,k$. Agent $a_1$ must still get value at least $1/2k$ from the interval $[0,1]$; otherwise she can report $W_1=[0,1]$ instead. This means that $a_2$ gets a total value of at most $1-1/2k$ in this instance.

Finally, consider the instance where $W_1=W_2=[0,1]\cup[k,3k-1]$ and $W_{2i-1}=W_{2i}=[i-1,i]$ for $i=2,3,\dots,k$. By proportionality, $a_2$ must receive value at least $1$; let $B_2\subseteq [0,1]\cup[k,3k-1]$ be a piece of length $1$ that $a_2$ receives. If $W_2=B_2$ while the other $W_i$'s remain fixed, then since the mechanism is position oblivious, $a_2$ must get a total value of at most $1-1/2k$. However, in that case $a_2$ can report $W_2=[0,1]\cup[k,3k-1]$ and receive value $1$. This implies that the mechanism is not truthful and yields the desired contradiction.
\end{proof}

\end{document}